\newtheorem{theorem}{Theorem}
\newcommand{\myH}{\tilde{\gvec{H}}}
\title{Uncertainty in Data-Driven Kalman Filtering for Partially known State-Space Models}
\name{Itzik Klein, Guy Revach, Nir Shlezinger, Jonas E. Mehr, Ruud J. G. van Sloun, and Yonina. C. Eldar
\thanks{
I. Klein is with the Hatter Department of Marine Technologies, University of Haifa, Israel (email: kitzik@univ.haifa.ac.il). 
G. Revach and J. E. Mehr are with the Institute for Signal and Information Processing (ISI), D-ITET, ETH Zürich, (email: grevach@ethz.ch, mehrjo@student.ethz.ch).
N. Shlezinger is with the School of ECE, Ben-Gurion University of the Negev, Beer Sheva, Israel (e-mail: nirshl@bgu.ac.il).
R. J. G. van Sloun is with the EE Dpt., Eindhoven University of Technology, and with Phillips Research, Eindhoven,  The Netherlands (e-mail: r.j.g.v.sloun@tue.nl).
Y. C. Eldar is with the Faculty of Math and CS, Weizmann Institute of Science, Rehovot, Israel (e-mail: yonina.eldar@weizmann.ac.il).
The authors thank Prof. Hans-Andrea Loeliger for his helpful comments and discussion.
}}
\address{}
\begin{document}

\maketitle
%
%
\begin{abstract} 
Providing a metric of uncertainty alongside a state estimate is often crucial when tracking a dynamical system. Classic state estimators, such as the \ac{kf}, provide a time-dependent uncertainty measure from knowledge of the underlying statistics; however,  \acl{dl} based tracking systems struggle to reliably characterize uncertainty. 
In this paper, we investigate the ability of \acl{kn}, a recently proposed; hybrid; \acl{mb}; deep state tracking algorithm, to estimate an uncertainty measure.
By exploiting the interpretable nature of \acl{kn}, we show that the error covariance matrix can be computed based on its internal features, as an uncertainty measure. We  demonstrate that  when the system dynamics are known, \acl{kn}—which learns its mapping from data without access to the statistics—provides uncertainty similar to that provided by the \ac{kf}; and while in the presence of evolution model-mismatch, \acl{kn} provides a more accurate error estimation.

%
\end{abstract}
\begin{keywords}
\acl{kf}, \acl{dl}, uncertainty
\end{keywords}
\acresetall
%
%
\section{Introduction}\label{sec:intro}
Tracking a hidden state vector from noisy observations in \acl{rt} is at the core of many \acl{sp} applications. A leading approach to the task is the \ac{kf} \cite{kalman1960new}, which operates with low complexity and achieves a \ac{mmse} in setups characterized by \acl{lg} \ac{ss} models. A key merit of the \ac{kf}, which is of paramount importance in safety critical applications, e.g., autonomous driving, aviation, and medical, is its ability to provide uncertainty alongside state estimation \cite[Ch. 4]{durbin2012time}. \ac{kf} and its variants are \ac{mb} algorithms, and are therefore sensitive to inaccuracies in modeling the system dynamics using the \ac{ss} model. 

The recent success of deep learning architectures, such as \acp{rnn} \cite{chung2014empirical} and attention mechanisms \cite{vaswani2017attention}, in learning from complex time-series data in unstructured environments and in a model-agnostic manner, evoked interest in using them for tracking dynamical systems. However, the lack of interpretability of deep architectures and their inability to capture uncertainty \cite{becker2019recurrent}, together with the fact that they require many trainable parameters and large data sets even for simple  setups \cite{zaheer2017latent}, limit their applicability for safety-critical applications in hardware-limited systems. 

Characterizing uncertainty in \acp{dnn} is an active area of research \cite{nguyen2015deep, poggi2017quantitative, osband2021epistemic}. One approach is to use Bayesian \acp{dnn} \cite{jospin2020hands}, which, when combined with \ac{ss} models, enables extraction of the uncertainty. See, e.g., \cite{karl2016deep, krishnan2017structured, naesseth2018variational}. However, doing so relies on variational inference, which makes learning more complex and less scalable, and cannot be used directly for state estimation \cite{becker2019recurrent}. Alternatively, one can use deep learning techniques to estimate the \ac{ss} model parameters and then plug them into a variant of a \ac{kf} that provides uncertainty; e.g., \cite{abbeel2005discriminative, haarnoja2016backprop, laufer2018hybrid, xu2021ekfnet}. These approaches are limited in accuracy and complexity, requiring linearization of the second-order moments to compute the error covariance as required by the \ac{mb} \ac{kf}. The \ac{kf}-inspired \ac{rnn} proposed \cite{becker2019recurrent} was trained to predict both the state and the error. However, \cite{becker2019recurrent} focused on specific factorizable \ac{ss} models with partially observable states, for which the \ac{kf} is simplified to parallel scalar operations, and the proposed architecture does not naturally extend to general \ac{ss} models.

The recently proposed \acl{kn}  \cite{KalmanNetTSPa} utilizes \acp{rnn} to enhance the robustness of the \ac{kf} for complex and mismatched models as a form of \ac{mb} deep learning \cite{shlezinger2020model}. \acl{kn} was shown to operate reliably in practical hardware-limited systems \cite{KalmannetICAS21} due to its principled incorporation of partial domain knowledge. 
In this work we show how \acl{kn} can be extended to provide uncertainty measures,  
by exploiting its interpretable architectures and the fact that it preserves the internal flow of the \ac{kf}. In particular, we build upon the identification of an internal feature as the estimated \ac{kg}, and show that when combined with partial domain knowledge, it can be used to compute the time-dependent error {covariance} matrix. We numerically show that the extracted uncertainty indeed reflects the performance of \acl{kn}, providing similar state estimates and error measures as the \ac{kf}—which knows the \ac{ss} model—while being notably more reliable in terms of tracking and uncertainty in the presence of mismatched models.

The rest of the paper is organized as follows: Section~\ref{sec:Model} reviews the \ac{ss} model and recalls the \ac{kf} and \acl{kn}. Section~\ref{sec:KNet} analyzes the extraction of uncertainty in \acl{kn}, while Section~\ref{sec:NumEval} presents a numerical study. 
%
%
\section{System Model and Preliminaries}\label{sec:Model}
We review the \ac{ss} model and briefly recall the \ac{mb} \ac{kf}, since its operation serves as the baseline for \acl{kn} and for its uncertainty extraction scheme detailed in Section~\ref{sec:KNet}. We then recap the \acl{dd} filtering problem and the architecture of \acl{kn}. For simplicity, we focus on linear \ac{ss} models, although the derivations can also be used for non-linear models in the same manner as the extended \ac{kf} \cite[Ch. 10]{durbin2012time}. 

\subsection{System Model and Model-Based Kalman Filtering}\label{ssec:ssmdl} 
We consider a dynamical system characterized  by a linear, Gaussian, continuous evolution model in \acl{dt}. For $t\in\gint$, this \ac{ss} model is defined by \cite{bar2004estimation}
\begin{subequations}\label{eq:ssmodel}
\begin{align}\label{eqn:stateEvolution}
\gvec{x}_{t}&= 
\gvec{F}\cdot{\gvec{x}_{t-1}}+\gvec{w}_{t},& 
\gvec{w}_t\sim
\mathcal{N}\brackets{\gvec{0},\gvec{Q}},&
\quad
\gvec{x}_{t}\in\greal^m,\\ \label{eqn:stateObservation}
\gvec{y}_{t}&=
\gvec{H}\cdot{\gvec{x}_{t}}+\gvec{v}_{t},& 
\gvec{v}_t\sim
\mathcal{N}\brackets{\gvec{0},\gvec{R}},&
\quad
\gvec{y}_{t}\in\greal^n.    
\end{align}
\end{subequations}
In \eqref{eqn:stateEvolution}, $\gvec{x}_{t}$ is the latent state vector of the system at time $t$, which evolves by a linear state evolution matrix $\gvec{F}$ and by an \ac{awgn} $\gvec{w}_t$ with noise covariance $\gvec{Q}$. In \eqref{eqn:stateObservation}, $\gvec{y}_{t}$ is the vector of observations at time $t$, $\gvec{H}$ is the measurement matrix, and $\gvec{v}_t$ is an \ac{awgn} with measurement noise covariance $\gvec{R}$. The filtering problem deals with \acl{rt} \acl{se}; i.e., the recovery of $\gvec{x}_t$ from $\{\gvec{y}_\tau\}_{\tau \leq t}$ for each time instance $t$ \cite{durbin2012time}. 

The \ac{kf} is a two-step, low complexity, recursive algorithm that produces a new estimate $\hat{\gvec{x}}_t$ from a new observation $\gvec{y}_{t}$ based on the previous estimate $\hat{\gvec{x}}_{t-1}$ as a sufficient statistic. In the first step it {predicts} the statistical moments based on the previous \textit{a posteriori} estimates:
\begin{subequations}
\label{eqn:predict2}
\begin{align} 
\hat{\gvec{x}}_{t\given{t-1}} &= 
\gvec{F}\cdot{\hat{\gvec{x}}_{t-1}},
\quad
\mySigma_{t\given{t-1}} \!=\!
\gvec{F}\cdot\mySigma_{t-1}\cdot\gvec{F}^\top\!+\!\gvec{Q},\\\label{eq:s}
\hat{\gvec{y}}_{t\given{t-1}} &=
\gvec{H}\cdot{\hat{\gvec{x}}_{t\given{t-1}}},
\hspace{0.075cm}
\gvec{S}_{t\given{t-1}}\! =\!
\gvec{H}\cdot\mySigma_{t\given{t-1}}\cdot\gvec{H}^\top\!+\!\gvec{R}.
\end{align}
\end{subequations} 
In the second step, the \textit{a posteriori} moments are updated based on the \textit{a priori} moments. The pivot computation for this step is the \acl{kg} $\Kgain_t$:
\begin{equation}\label{eq:lingain}
\Kgain_{t}={\mySigma}_{t\given{t-1}}\cdot{\gvec{H}}^\top\cdot{\gvec{S}}^{-1}_{t\given{t-1}}.
\end{equation}
Given the new observation $\gvec{y}_t$, the state estimate, i.e., the first-order posterior, is obtained via
\begin{equation}\label{eq:update1}
\hat{\gvec{x}}_{t}=
\hat{\gvec{x}}_{t\given{t-1}}+\Kgain_{t}\cdot\Delta\gvec{y}_t,
\quad
\Delta\gvec{y}_t=\gvec{y}_t-\hat{\gvec{y}}_{t\given{t-1}}.
\end{equation}
The second-order posterior, which is the estimation error covariance, is then computed as
\begin{equation}\label{eq:update2}
{\mySigma}_{t}=
{\mySigma}_{t\given{t-1}}-\Kgain_{t}\cdot{\mathbf{S}}_{t\given{t-1}}\cdot\Kgain^{\top}_{t}.
\end{equation}
%
%
When the noise is Gaussian and the \ac{ss} model parameters  \eqref{eq:ssmodel} are fully known, the \ac{kf} is the \ac{mmse} estimator. 
%
%
%
\subsection{Data-Driven Filtering with KalmanNet}\label{subsec:KalmanNet}
In practice, the state evolution model \eqref{eqn:stateEvolution} is determined by the complex dynamics of the underlying system, while the observation model \eqref{eqn:stateObservation} is dictated by the type and quality of the observations. For instance, $\gvec{x}_t$ can be the location, velocity, and acceleration of a vehicle, while $\gvec{y}_{t}$ are measurements obtained from   several sensors. For \acl{rw} problems it is often difficult to characterize the \ac{ss} model accurately. In data-driven filtering, one relies on a {labeled} \acl{ds}—i.e., $\gvec{y}_t$ with its corresponding \acl{gt} $\gvec{x}_t$—to fill the information gap.

\acl{kn} \cite{KalmanNetTSPa} is a \ac{dnn}-aided architecture for \acl{se} and filtering with partial domain knowledge. It considers filtering without knowledge of the noise covariance matrices and with a possibly inaccurate description of the evolution matrix $\gvec{F}$ and  the observation model $\gvec{H}$, obtained from, e.g., understanding of the system dynamics and the sensing setup.  
Although $\gvec{R}$, the observation covariance, is not needed for state estimation with \acl{kn}, for simplicity of derivation, here we assume that it is known (or estimated) and thus set $\gvec{R} = \gvec{I}_n$, since one can always process $\gvec{R}^{-1/2}\gvec{y}_t$ without altering the achievable  \ac{mse}. 

\acl{kn} \cite{KalmanNetTSPa} implements data-driven filtering by augmenting the theoretically solid flow of the \ac{mb} \ac{kf} with a \ac{rnn}. The latter is designed to learn to estimate the \ac{kg}, whose \ac{mb} computation encapsulates the missing domain knowledge, from data. The inherent memory of the \ac{rnn} allows implicit tracking of the second-order statistical moments without requiring knowledge of the underlying noise statistics. In a similar manner as the \ac{kf}, it predicts the first-order moments  $\hat{\gvec{x}}_{t\given{t-1}} $ and 
$\hat{\gvec{y}}_{t\given{t-1}}$, which are then used to estimate $\gvec{x}_t$ via  \eqref{eq:update1}. The overall system, illustrated in Fig.~\ref{fig:KNet1}, is trained end-to-end to minimize the \ac{mse}  between $\hat{\gvec{x}}_t$ and the true $\gvec{x}_t$. 
%
%
\begin{figure}[]
\includegraphics[width=1\columnwidth]{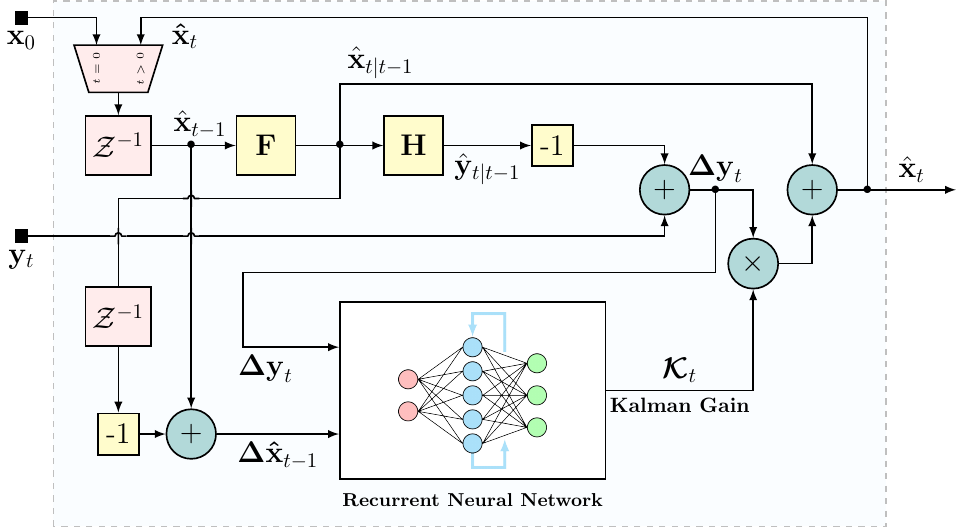}
\caption{\acl{kn} block diagram.}
\label{fig:KNet1}
\end{figure}
%
%
\section{Uncertainty in KalmanNet}\label{sec:KNet}
\acl{kn}, detailed in Subsection \ref{subsec:KalmanNet}, is designed and trained to estimate the state variable $\gvec{x}_t$. Neither its architecture nor the loss measure it uses for training encourage \acl{kn} to maintain an estimate of its error covariance, which the \ac{kf} provides. However, as we show here, the fact that \acl{kn} preserves the flow of the \ac{kf} allows to estimate its error covariance from its internal features.

%
%
\subsection{Kalman Gain-based Error Covariance}
To extend \acl{kn} to provide uncertainty, we build on top of the interpretable feature of \acl{kn}, which is the estimated \ac{kg}. Combined with the observation model, the \ac{kg} can be used to compute the time-dependent, error covariance matrix $\mySigma_t$, thus bypassing the need to explicitly estimate the evolution model, as stated in the following theorem: 
\begin{theorem}
\label{thm:Covariance}
Consider the \ac{ss} model \eqref{eq:ssmodel} where $\gvec{H}$ has full column rank; i.e.,  $\myH=\brackets{\gvec{H}^\top\cdot\gvec{H}}^{-1}$ exists. Then, filtering with the \ac{kg} $\Kgain_t$ \eqref{eq:lingain} results in estimation with error covariance  
\begin{align}
{\mySigma}_{t}=\brackets{\gvec{I}_m-\Kgain_t\cdot\gvec{H}}&\cdot \myH\cdot
\gvec{H}^\top\brackets{\gvec{I}_n-\gvec{H}\cdot\Kgain_{t}}^{-1}\notag \\
&\cdot\gvec{H}\cdot\Kgain_{t}\cdot\gvec{H}\cdot\myH. 
\label{eqn:KN_covariance}
\end{align}
\end{theorem}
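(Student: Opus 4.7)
The plan is to express $\mySigma_{t}$ entirely in terms of the learned Kalman gain $\Kgain_{t}$ and the observation matrix $\gvec{H}$, by first reducing the posterior update to a product that involves only $\mySigma_{t\given{t-1}}$, and then inverting relations \eqref{eq:s}--\eqref{eq:lingain} together with the full column-rank hypothesis to recover $\mySigma_{t\given{t-1}}$.

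First, starting from \eqref{eq:update2} and substituting $\Kgain_{t}\cdot\gvec{S}_{t\given{t-1}}=\mySigma_{t\given{t-1}}\cdot\gvec{H}^{\top}$ from \eqref{eq:lingain}, together with the symmetry of $\mySigma_{t\given{t-1}}$, I would obtain the familiar compact form
\begin{equation*}
\mySigma_{t} = \brackets{\gvec{I}_{m}-\Kgain_{t}\cdot\gvec{H}}\cdot\mySigma_{t\given{t-1}}.
\end{equation*}
The task therefore reduces to reconstructing $\mySigma_{t\given{t-1}}$ from $\Kgain_{t}$ and $\gvec{H}$ alone. Second, I would extract the innovation covariance from the gain. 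Since $\gvec{R}=\gvec{I}_{n}$, pre-multiplying \eqref{eq:lingain} by $\gvec{H}$ and using \eqref{eq:s} gives $\gvec{H}\cdot\Kgain_{t}=\gvec{I}_{n}-\gvec{S}_{t\given{t-1}}^{-1}$, hence $\gvec{S}_{t\given{t-1}}=\brackets{\gvec{I}_{n}-\gvec{H}\cdot\Kgain_{t}}^{-1}$ and
\begin{equation*}
\gvec{H}\cdot\mySigma_{t\given{t-1}}\cdot\gvec{H}^{\top}=\gvec{S}_{t\given{t-1}}-\gvec{I}_{n}=\brackets{\gvec{I}_{n}-\gvec{H}\cdot\Kgain_{t}}^{-1}\cdot\gvec{H}\cdot\Kgain_{t}.
\end{equation*}

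Third, I would invoke the full column-rank hypothesis to lift $\gvec{H}\mySigma_{t\given{t-1}}\gvec{H}^{\top}$ back to $\mySigma_{t\given{t-1}}$. Because $\myH\cdot\gvec{H}^{\top}\cdot\gvec{H}=\gvec{I}_{m}$, the a priori covariance satisfies the sandwich identity $\mySigma_{t\given{t-1}}=\myH\cdot\gvec{H}^{\top}\cdot\brackets{\gvec{H}\cdot\mySigma_{t\given{t-1}}\cdot\gvec{H}^{\top}}\cdot\gvec{H}\cdot\myH$. Substituting the expression derived in the second step yields $\mySigma_{t\given{t-1}}=\myH\cdot\gvec{H}^{\top}\brackets{\gvec{I}_{n}-\gvec{H}\cdot\Kgain_{t}}^{-1}\cdot\gvec{H}\cdot\Kgain_{t}\cdot\gvec{H}\cdot\myH$, which when plugged into the reduction from the first step produces \eqref{eqn:KN_covariance}.

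The main conceptual step is the third one: undoing the two-sided compression $\mySigma_{t\given{t-1}}\mapsto\gvec{H}\mySigma_{t\given{t-1}}\gvec{H}^{\top}$ in closed form, since the gain itself only gives us access to this compressed quantity (via $\gvec{S}_{t\given{t-1}}$). The full column-rank assumption is precisely what makes $\myH$ well-defined and renders $\myH\gvec{H}^{\top}\gvec{H}$ an exact identity on the left; this is the algebraic fact that permits writing $\mySigma_{t}$ purely in terms of $\gvec{H}$, $\myH$, and $\Kgain_{t}$. The rest of the derivation is bookkeeping that recombines the identities above.
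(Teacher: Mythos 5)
Your proposal is correct and follows essentially the same route as the paper: reduce to $\mySigma_{t}=\brackets{\gvec{I}_m-\Kgain_t\cdot\gvec{H}}\cdot\mySigma_{t\given{t-1}}$, solve the gain equation (with $\gvec{R}=\gvec{I}_n$) for $\gvec{H}\cdot\mySigma_{t\given{t-1}}\cdot\gvec{H}^\top=\brackets{\gvec{I}_n-\gvec{H}\cdot\Kgain_t}^{-1}\cdot\gvec{H}\cdot\Kgain_t$, and undo the compression via the left inverse $\myH\cdot\gvec{H}^\top$. Your intermediate step $\gvec{H}\cdot\Kgain_t=\gvec{I}_n-\gvec{S}_{t\given{t-1}}^{-1}$ is only a cosmetic variant of the paper's rearrangement and leads to the identical identity, so the argument is sound and equivalent.
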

\begin{proof}
By combining \eqref{eq:lingain} and \eqref{eq:update2}, the error covariance can be written as ${\mySigma}_{t}=\brackets{\gvec{I}_m-\Kgain_t\cdot\gvec{H}}{\mySigma}_{t|t-1}$. Thus, to estimate ${\mySigma}_{t}$, we express ${\mySigma}_{t|t-1}$ using $\Kgain_t$ and the available domain knowledge. %
%
%
To that end, multiplying   \eqref{eq:lingain} by $\gvec{H}$ gives
\begin{equation}\label{eq:step2}
\gvec{H}\cdot\Kgain_{t}=\gvec{H}\cdot{\mySigma}_{t\given{t-1}}\cdot{\gvec{H}}^\top\cdot{\gvec{S}}^{-1}_{t\given{t-1}}.
\end{equation}
Next, \eqref{eq:step2} is multiplied by $\gvec{S}_{t\given{t-1}}$ from the right side, followed by substitution of \eqref{eq:s}, which yields
\begin{equation}\label{eq:step3}
\gvec{H}\cdot\Kgain_{t}\cdot\brackets{\gvec{H}\cdot\mySigma_{t\given{t-1}}\cdot\gvec{H}^\top+\gvec{R}}=
\gvec{H}\cdot{\mySigma}_{t\given{t-1}}\cdot{\gvec{H}}^\top.
\end{equation}
Combining terms in \eqref{eq:step3} results in
\begin{equation} \label{eq:step5}
\gvec{H}\cdot\mySigma_{t\given{t-1}}\cdot\gvec{H}^\top=
\brackets{\gvec{I}_n-\gvec{H}\cdot\Kgain_{t}}^{-1}\cdot\gvec{H}\cdot\Kgain_{t}.
\end{equation}
Multiplying \eqref{eq:step5} from the left  by $\gvec{H}^\top$ and from the right by $\gvec{H}$ yields 
$\gvec{H}^\top\cdot\gvec{H}\cdot\mySigma_{t\given{t-1}}\cdot\gvec{H}^\top\cdot\gvec{H}= \gvec{H}^\top\brackets{\gvec{I}_n-\gvec{H}\cdot\Kgain_{t}}^{-1}\cdot\gvec{H}\cdot\Kgain_{t}\cdot\gvec{H}$, 
which, when $\myH=\brackets{\gvec{H}^\top\cdot\gvec{H}}^{-1}$ exists, results in  
\begin{equation}
    \mySigma_{t\given{t-1}}=\myH\cdot
\gvec{H}^\top\brackets{\gvec{I}_n-\gvec{H}\cdot\Kgain_{t}}^{-1}\cdot \gvec{H}\cdot\Kgain_{t} \cdot\gvec{H}\cdot\myH,
\end{equation}
concluding the proof of the theorem.
\end{proof}
Theorem~\ref{thm:Covariance} indicates that when the observation model is known and $\gvec{H}$ has full column rank—requiring that the  number of measurements not be smaller than the number of tracked states—then one can extend \acl{kn} to predict its error covariance alongside its state estimation. The resulting procedure is summarized as Algorithm~\ref{alg:Algo1}. 

%
\subsection{Discussion}\label{ssec:discussion}
%
%
\begin{algorithm}[t!]
\caption{\acl{kn} with error prediction}
\label{alg:Algo1}
\KwData{Trained \acl{kn}, previous estimate $\hat{\gvec{x}}_{t-1}$, incoming observations $\gvec{y}_t$;  }
{\bf Filtering:} Apply \acl{kn} to estimate $\hat{\gvec{x}}_{t}$\;
{\bf Error prediction:} Estimate $\hat{\mySigma}_t$ via \eqref{eqn:KN_covariance} with $\Kgain_t$ being the output of the internal \ac{rnn} of \acl{kn}\;
\KwOut{Predicted state $\hat{\gvec{x}}_{t}$ and error $\hat{\mySigma}_t$}
\end{algorithm}
\acl{kn} \cite{KalmanNetTSPa} was designed for \acl{rt} tracking in \acl{nl} \ac{ss} models with unknown noise statistics. Algorithm~\ref{alg:Algo1} extends it to provide not only $\hat{\gvec{x}}_t$, an estimate of the state—i.e., for which it was previously derived and trained—but also $\hat{\mySigma}_t$, an estimate of the time-dependent covariance matrix as an uncertainty measure. This is achieved since \acl{kn} retains the interpretable flow of the \ac{mb} \ac{kf}, where the \ac{kg} is learned by a dedicated \ac{rnn}. Numerical evaluations presented in Section~\ref{sec:NumEval} indicate that \acl{kn} computes the true covariance matrix as it reflects the true empirical error—as expected from the \ac{mse} bias variance decomposition theorem \cite{friedman2001elements}.

For simplicity and clarity of exposition, Algorithm~\ref{alg:Algo1} was derived for linear \ac{ss} models. In Section~\ref{sec:NumEval} we also present results for a \acl{nl} chaotic system, where the {extended} \ac{kf} derivation is used;  i.e., the \ac{ss} model matrices are replaced with their respecting Jacobians \cite[Ch. 10]{durbin2012time}. The additional requirement—that observation noise covariance $\gvec{R}$ is known and needed only for the purpose of estimating the covariance—is often satisfied, as in many applications the main modelling challenge is related to the state evolution rather than the measurement noise. Since we assume that we have access to a {labeled} \acl{ds}, one can estimate $\gvec{R}$ with standard techniques without assumptions regarding the evolution. The case  where $\gvec{H}$ is not full column rank is left for future work.
%
%
\section{Numerical Evaluations}\label{sec:NumEval}
In this section we numerically\footnote{{The source code along with additional information on the numerical study can be found online at \url{https://github.com/KalmanNet/ERRCOV_ICASSP22}.}} compare the covarince computed by \acl{kn} to the one computed by the \ac{kf} in the linear and \acl{nl} cases.

We start with the \acl{lg} \ac{ss} model, for which the \ac{kf} achieves the \ac{mmse} lower bound, and is an {unbiased} estimator \cite{humpherys2012fresh}. We generate data from a scalar \ac{ss} model where $\gvec{H}=\gvec{Q}=\gvec{R}=1$ and $\gvec{F}=0.9$, with sequence length of $100$ samples, and compare the error predicted by \acl{kn} and its deviation from the true error to those computed by the \ac{mb} \ac{kf} that knows the \ac{ss} model. We observe in Fig.~\ref{fig:Linear_full_err_evo} that the theoretical error computed by \ac{kf} for each time step $t$ and the error predicted by \acl{kn} coincide, and that both algorithms have a similar empirical error. In Fig.~\ref{fig:Linear_full_traj} we demonstrate that for a single realization of \acl{gt} trajectory, both algorithms produce the same uncertainty bounds. Next, we consider the case where both \acl{kn} and the \ac{mb} \ac{kf} are plugged in with a mismatched model parameter $\gvec{F}=0.5$. In Figs.~\ref{fig:Linear_par_err_evo} and \ref{fig:Linear_par_traj} we observe that for such a model mismatch, \acl{kn} produces uncertainty similar to the empirical error, while the \ac{kf} underestimates its empirical error.
%
%
\begin{figure}[t!]
\centering
\includegraphics[width=1\columnwidth]{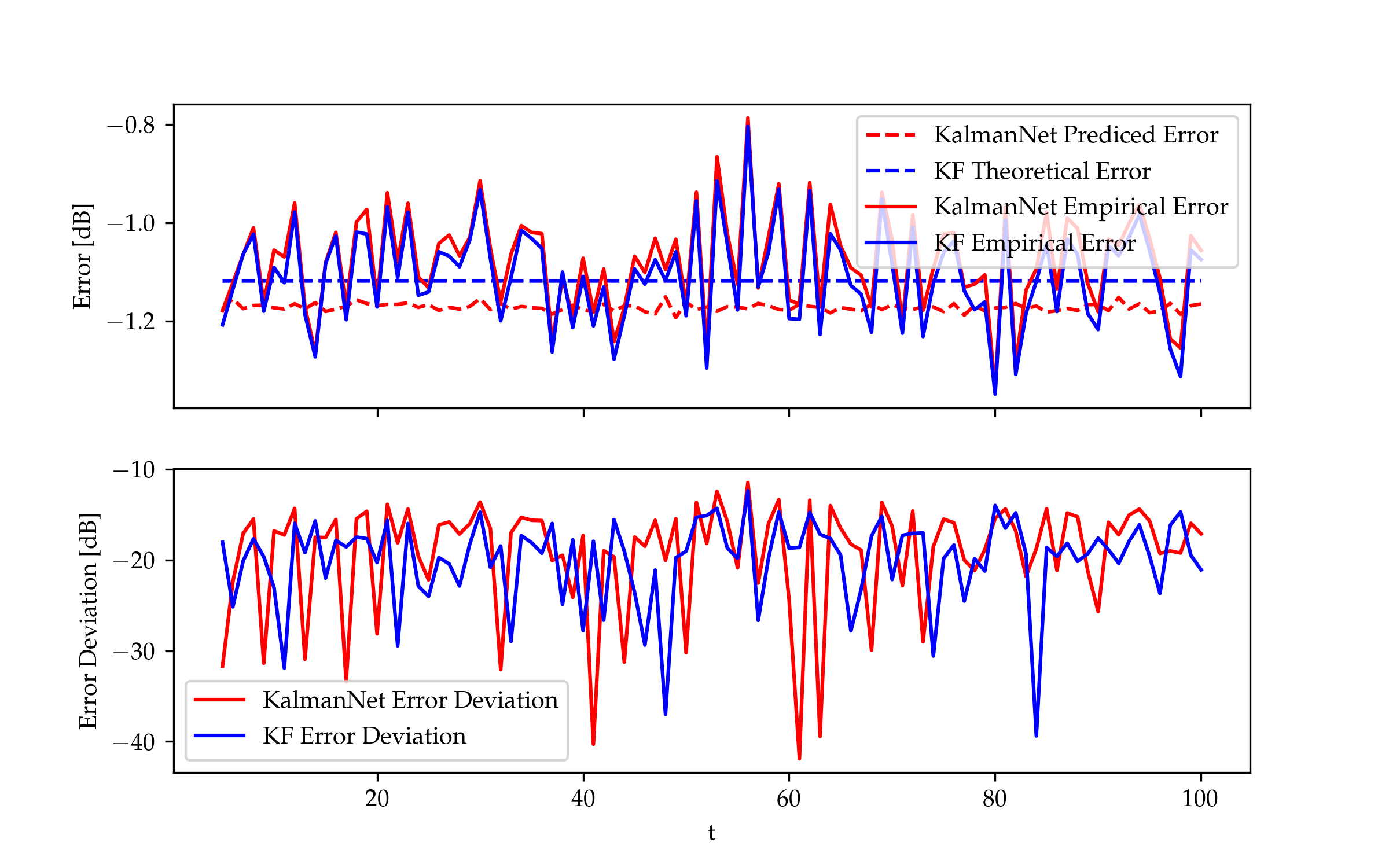}
\caption{Average error, linear \ac{ss} model. }
\label{fig:Linear_full_err_evo}
\vspace{-0.2cm}
\end{figure}
%
%
\begin{figure}[t!]
\centering
\includegraphics[width=1\columnwidth]{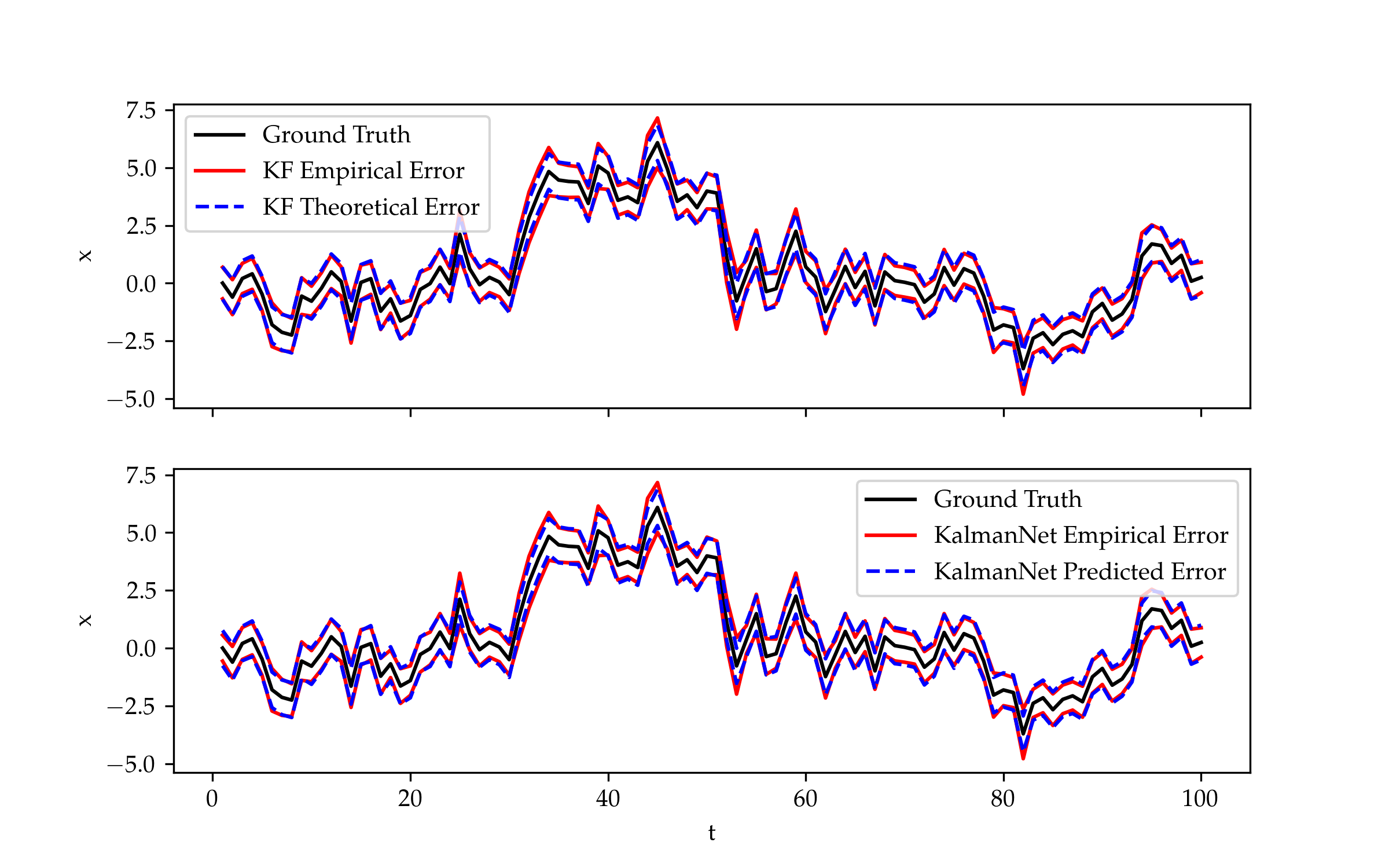}
\caption{Single trajectory, linear \ac{ss} model.}
\label{fig:Linear_full_traj}
\vspace{-0.2cm}
\end{figure}
%
%
\begin{figure}[t!]
\centering
\includegraphics[width=1\columnwidth]{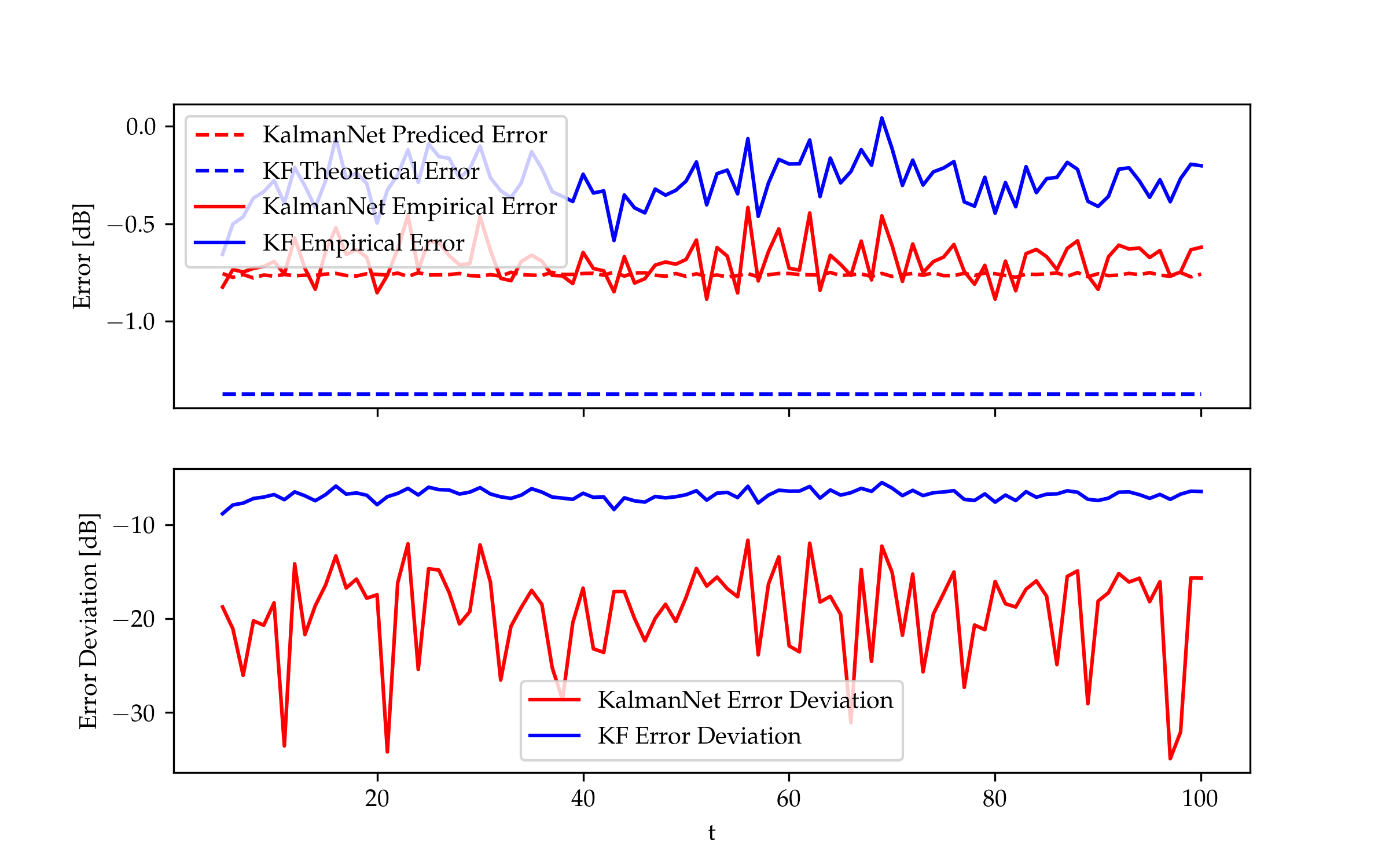}
\caption{Average error, linear \ac{ss} model with mismatch.}
\label{fig:Linear_par_err_evo}
\vspace{-0.2cm}
\end{figure}
%
%
\begin{figure}[t!]
\centering
\includegraphics[width=1\columnwidth]{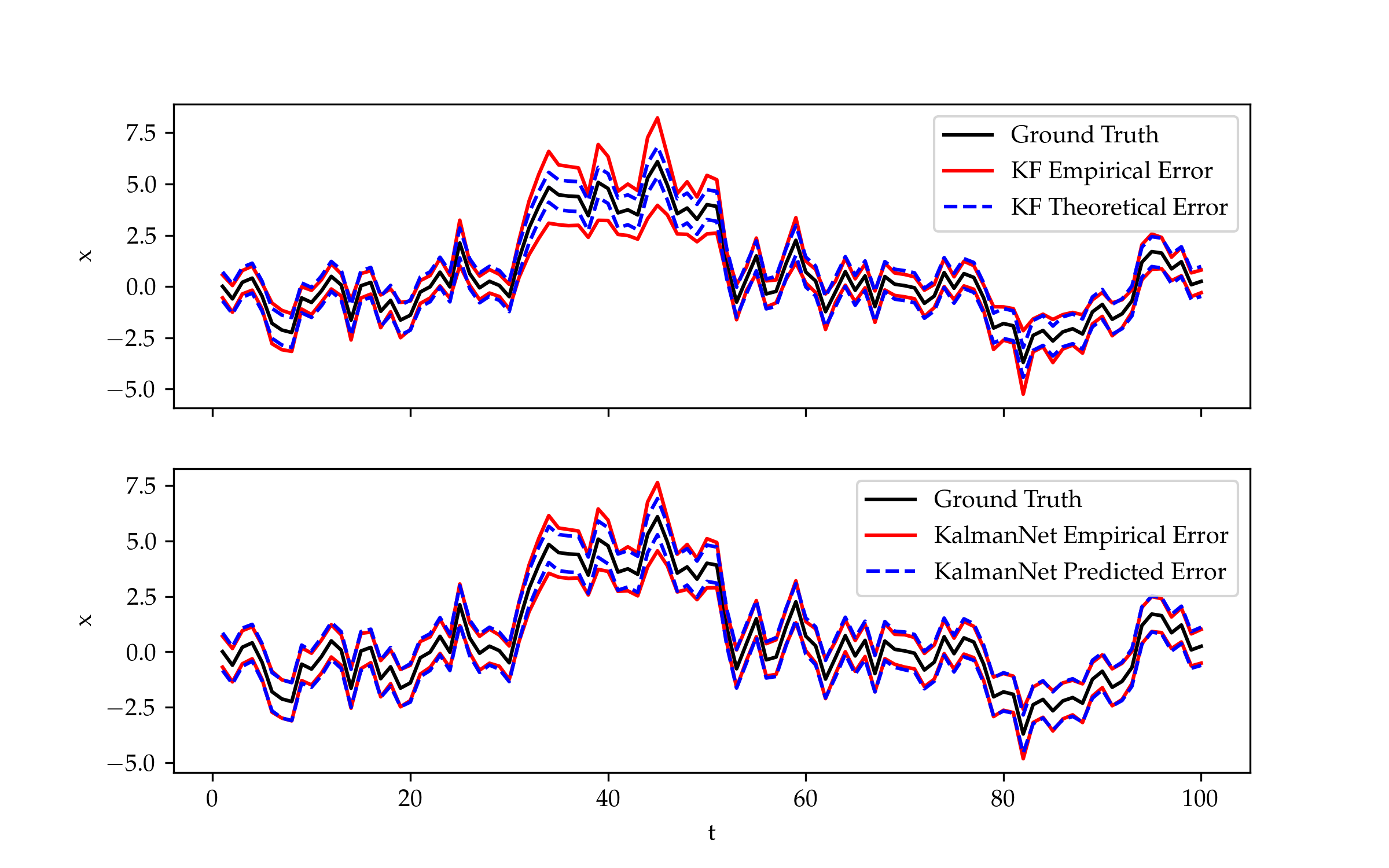}
\caption{Single trajectory, linear \ac{ss} model with mismatch.}
\label{fig:Linear_par_traj}
\vspace{-0.2cm}
\end{figure}

Next, we demonstrate the merits of \acl{kn} when filtering the \acl{la}—a challenging three-dimensional \acl{nl} chaotic system—and compare its performance to the extended \ac{kf}. See \cite{KalmanNetTSPa} for a detailed description of this setup. The model mismatch in this case is due to sampling a \acl{ct} system characterized by differential equations to \acl{dt}. In Fig.~\ref{fig:Lorentz} we  clearly observe that \acl{kn} achieves a lower \ac{mse} and estimates its error fairly accurately, while the extended \ac{kf} overestimates it. 
%
%
\begin{figure}[t!]
\centering
\includegraphics[width=1\columnwidth]{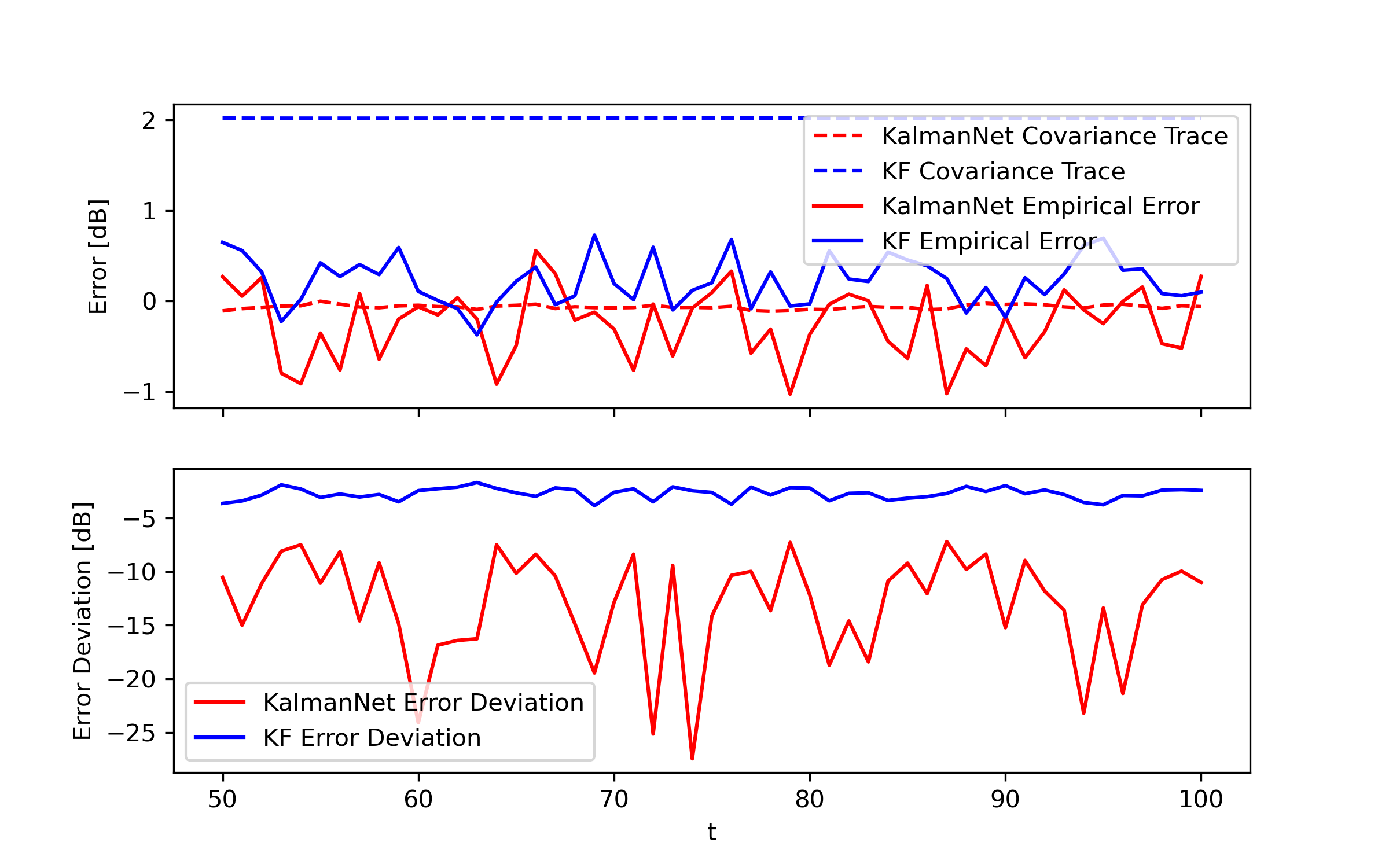}
\caption{Average error, \acl{nl} \acl{la} model.}
\label{fig:Lorentz}
\vspace{-0.2cm}
\end{figure}
\section{Conclusions}\label{sec:Conclusions}
In this work we extended the recently proposed \acl{kn} state estimator to predict its error alongside the latent state. This is achieved by exploiting the hybrid \acl{mb}/\acl{dd} architecture of \acl{kn}, which produces the \ac{kg} as an internal feature. We prove that one can often utilize the learned \ac{kg} to predict the error covariance as a measure of uncertainty. Our numerical results demonstrate that this extension allows \acl{kn} to accurately predict both the state and error, improving upon the \ac{kf} in the presence of model-mismatch and non-linearities. 
%
\bibliographystyle{IEEEbib}
\bibliography{IEEEabrv,KalmanNet}

\end{document}